\renewcommand\footnotetextcopyrightpermission[1]{} 
\newtheorem{example}{Example}
\newtheorem{corollary}{Corollary}
\newtheorem{theorem}{Theorem}
\newcommand{\qop}[1]{\ensuremath{\mathit{#1}}}
\DeclarePairedDelimiterX{\abs}[1]{\lvert}{\rvert}{\ifblank{#1}{{}\cdot{}}{#1}}
\newcommand{\CC}{C\nolinebreak\hspace{-.05em}\raisebox{.2ex}{\footnotesize\bf +}\nolinebreak\hspace{-.10em}\raisebox{.2ex}{\footnotesize\bf +}}
\def\thm@space@setup{%
  \thm@preskip=.5\topsep
  \thm@postskip=\thm@preskip %
}
\begin{document}

\title[Random Stimuli Generation for the Verification of Quantum Circuits]{\resizebox{\linewidth}{!}{Random Stimuli Generation for the Verification of Quantum Circuits}}

\author[Lukas Burgholzer, Richard Kueng, and Robert Wille]{\vspace*{-6mm}Lukas Burgholzer$^*$\hspace{3.0em}Richard Kueng$^*$\hspace{3.0em}Robert Wille$^{*\dagger}$}
\affiliation{%
   \institution{$^*$Johannes Kepler University Linz, Austria}
}
\affiliation{%
  \institution{$^\dagger$Software Competence Center Hagenberg GmbH (SCCH), Austria}
}
\email{{lukas.burgholzer, richard.kueng, robert.wille}@jku.at}
\email{https://iic.jku.at/eda/research/quantum/}

\begin{abstract}  
\vspace*{-0.35em}
Verification of quantum circuits is essential for guaranteeing correctness of quantum algorithms
and/or quantum descriptions across various levels of abstraction. 
In this work, we show that there are promising ways to check the correctness of quantum circuits using simulative verification and random stimuli.
To this end, we investigate %
how to properly generate stimuli for efficiently checking the correctness of a quantum circuit.
More precisely, we introduce, illustrate, and analyze three schemes for quantum stimuli generation---offering a trade-off between the error detection rate (as well as the required number of stimuli) and efficiency. 
In contrast to the verification in the classical realm, we show (both, theoretically and empirically) that even if only a few \emph{randomly-chosen} stimuli (generated from the proposed schemes) are considered, high error detection rates can be achieved for quantum circuits. %
The results of these conceptual and theoretical considerations have also been empirically confirmed---with a grand total of approximately $10^6$ simulations conducted across \si{50\,000} benchmark instances.
\end{abstract}

\maketitle

\vspace*{-3mm}
\section{Introduction}
\label{sec:intro}
\vspace*{-0.35em}

Verification methods are essential for demonstrating or even proving the correctness of classical circuits.
Their goal is to confirm whether a given circuit realization conforms to its specification.
In this regard, \emph{formal verification} methods~\cite{biereSATATPGBoolean2002,drechslerAdvancedFormalVerification2004}---which aim to prove correctness with \SI{100}{\percent} certainty---are well established, but often fail due to the exponential complexity of the task itself.
In contrast, \emph{simulative verification} methods~\cite{yuanConstraintbasedVerification2006,bergeronWritingTestbenchesUsing2006,kitchenStimulusGenerationConstrained2007,wille2009smt, leDetectionHardwareTrojans2019,laeuferRFUZZCoveragedirectedFuzz2018} are typically very fast as long as only a limited number of simulations with specific stimuli are conducted to achieve a desired coverage.
In order to generate high quality stimuli (which indeed are capable of detecting errors), methods such as \emph{constraint-based random simulation}~\mbox{\cite{yuanConstraintbasedVerification2006,bergeronWritingTestbenchesUsing2006,kitchenStimulusGenerationConstrained2007,wille2009smt}}, \emph{fuzzing}~\mbox{\cite{leDetectionHardwareTrojans2019,laeuferRFUZZCoveragedirectedFuzz2018}}, etc.~are employed.

In the quantum realm, the verification of quantum circuits is essential for guaranteeing correctness of quantum algorithms and/or quantum descriptions across various levels of abstraction. %
Here, sequences of quantum operations and/or quantum gates are employed which utilize quantum mechanical effects such as \emph{superposition}, \emph{entanglement}, or \emph{interference}~\cite{nielsenQuantumComputationQuantum2010}.
This allows for promising applications in various domains such as chemistry, finance, cryptography, or machine learning. But it also requires a more complex description than in the classical realm.
Consequently, the formal verification of quantum circuits poses even more challenges than in the classical realm---which even recent advances~\cite{burgholzerAdvancedEquivalenceChecking2020,burgholzerVerifyingResultsIBM2020,duncanGraphtheoreticSimplificationQuantum2019,yamashitaFastEquivalencecheckingQuantum2010,yamashitaFastEquivalencecheckingQuantum2010, ardeshir-larijaniAutomatedEquivalenceChecking2018} can only escape to a certain extent.

This motivates the consideration of simulative verification in the quantum realm (similar to the classical realm, where this is well established).
In this regard, the simulation of quantum circuits on a classical computer hardware is key.
Although this leads to an exponential complexity in order to describe the corresponding quantum states and operations, %
powerful methods have recently been proposed to tackle this problem~\cite{guerreschiIntelQuantumSimulator2020, jonesQuESTHighPerformance2018,villalongaFlexibleHighperformanceSimulator2019, pednaultLeveragingSecondaryStorage2019,seddonQuantifyingQuantumSpeedups2020, niemannQMDDsEfficientQuantum2016, zulehnerAdvancedSimulationQuantum2019,zulehnermatrix-vector2019}.
However, while the stimuli space for classical circuits is finite (each input bit can be assigned either $0$ or $1$---yielding a total of~$2^n$ possible stimuli), the state space in the quantum realm is infinitely large (possible stimuli are elements of a $2^n$-dimensional Hilbert space). 
This raises the question on whether simulative verification of quantum circuits (on classical computers) is suitable at all and, if so, how to generate proper stimuli to efficiently check the correctness of a quantum circuit.

In this work, we show that, although the perspective of a possible infinite number of stimuli may seem rather grim at a first glance, there are promising ways to check the correctness of quantum circuits using simulative verification and random stimuli. This, however, severely depends on how the stimuli are actually generated.
In fact, we introduce, illustrate, and analyze three schemes for quantum stimuli generation offering a nice trade-off between error detection rate (as well as the required number of stimuli) and efficiency. 
In contrast to classical circuits, we show (both, theoretically and empirically) that even if only a few \emph{randomly-chosen} stimuli (generated from the proposed schemes) are considered, high error detection rates can be achieved in the quantum realm.
The results of these conceptual and theoretical considerations have also been empirically confirmed, which, to the best of our knowledge, led to the broadest empirical evaluation of %
simulative verification schemes for quantum circuits to date---with a grand total of approximately $10^6$ simulations conducted across \si{50\,000} benchmark instances.

The remainder of this paper is structured as follows:
Section~\ref{sec:motivation} provides the necessary background on classical verification, quantum circuits, and their verification.
Then, Section~\ref{sec:proposed} introduces, illustrates, and (theoretically) analyzes different stimuli generation schemes and their likeliness of detecting errors. 
The results of these conceptual and theoretical considerations are then empirically confirmed in Section~\ref{sec:results}.
Finally, Section~\ref{sec:conclusions} concludes the paper. 

\vspace*{-0.8em}
\section{Background and Motivation}
\label{sec:motivation}
\vspace*{-0.25em}

This work deals with verification of circuits---a topic which has been and currently still is heavily considered in the classical realm.
Because of this, we first briefly review the established schemes in this section. Afterwards, we provide the basics on quantum computing and quantum circuits and, based on that, eventually discuss the challenges of the verification of quantum circuits. By this, we motivate our work.

\vspace*{-0.75em}
\subsection{Verification of Classical Circuits}
\label{sec:classical}
\vspace*{-0.25em}

\sloppypar
In order to demonstrate or even prove the correctness of classical circuits, 
verification methods are applied.
They check whether a given circuit, the \emph{Design Under Verification}~(DUV), adheres to an also given \emph{Golden Specification}.
To this end, current (industrial) practice mainly applies schemes such as 
\begin{itemize}
	\item {\em simulative verification}~\cite{yuanConstraintbasedVerification2006,bergeronWritingTestbenchesUsing2006,kitchenStimulusGenerationConstrained2007,wille2009smt, leDetectionHardwareTrojans2019,laeuferRFUZZCoveragedirectedFuzz2018}, in which certain input assignments (\emph{stimuli}) are explicitly assigned to the circuit, propagated through it, and the outputs are compared to the expected values, or
	
	\item {\em formal verification}~\cite{biereSATATPGBoolean2002,drechslerAdvancedFormalVerification2004}, which considers the problem
	mathematically and proves that a circuit is correct with 100\% certainty.
\end{itemize}

Obviously, formal verification provides the best solution with respect to quality.
Corresponding methods are capable of efficiently traversing large parts of the
search space, e.g., by applying clever implications during the proof. The corresponding
techniques are, however, rather complex compared to their
simulative counterparts and, particularly for larger designs, often fail due to the exponential complexity of the task.

Simulation is much easier to implement and very fast as long as only a limited number of stimuli is applied. 
The problem obviously is the quality provided by the applied set of stimuli. An exhaustive set of stimuli 
would show correctness with 100\% certainty, but is practically intractable as this would eventually require an exponential number of stimuli to simulate. Accordingly, methods such as \emph{constraint-based random simulation}~\cite{yuanConstraintbasedVerification2006,bergeronWritingTestbenchesUsing2006,kitchenStimulusGenerationConstrained2007,wille2009smt}, \emph{fuzzing}~\cite{leDetectionHardwareTrojans2019,laeuferRFUZZCoveragedirectedFuzz2018}, etc.~are key
techniques to cope with this problem while still maintaining a high quality. 
Here, stimuli and/or data inputs are specifically generated (e.g., from constraints, mutations of randomly generated inputs, etc.) 
so that corner case scenarios and/or a broad variety of cases are triggered. In doing so, errors that might
otherwise remain undetected are more likely to be found.

However, despite substantial progress that has been made in the past, e.g., on improving the efficiency of formal methods or on stimuli generation which increases the coverage of simulative verification, verifying classical circuits remains a challenge and, hence, is subject of further research.

\subsection{Quantum Circuits}
\label{sec:quantum}

Quantum circuits promise more potential than classical circuits for many applications, but also require a more complex description.
In contrast to classical bits, the main computational unit of quantum circuits (the \emph{qubit}) cannot only be in one of the computational basis states \ket{0} or~\ket{1}, but also in a \emph{superposition} of both. That is, the state \ket{\varphi} of a qubit can be described as $\ket{\varphi} = \alpha_0 \ket{0} + \alpha_1 \ket{1}$ with \mbox{$\alpha_0,\,\alpha_1\in\mathbb{C}$} and \mbox{$\abs{\alpha_0}^2 + \abs{\alpha_1}^2 = 1$}.
More generally, the state of an $n$-qubit system is described by $2^n$ complex amplitudes $\alpha_i$---each associated to a computational basis state $\ket{i}=\ket{(i_{n-1}\dots i_0)_2}=\ket{i_{n-1}}\otimes \dots \otimes \ket{i_0}$.
It holds that $\ket{\varphi} = \sum_{i\in\{0,1\}^n} \alpha_i \ket{i}$ with \mbox{$\alpha_i\in\mathbb{C}$} and $\sum_{i\in\{0,1\}^n} \abs{\alpha_i}^2 = 1$.
Typically, those states %
are expressed as \mbox{$2^n$-dimensional} \emph{state vectors} consisting of all amplitudes,~i.e., \linebreak \mbox{$\ket{\varphi} \equiv [\alpha_0,\dots,\alpha_{2^n-1}]^\top$}.

\begin{example}\label{ex:state}
	Consider the two-qubit quantum state $\ket{\varphi}$ described by $\ket{\varphi}=\nicefrac{1}{\sqrt{2}} \ket{00} + 0 \ket{01} + 0 \ket{10} + \nicefrac{1}{\sqrt{2}} \ket{11}$.
	This is a valid quantum state since $\abs{\nicefrac{1}{\sqrt{2}}}^2 + \abs{\nicefrac{1}{\sqrt{2}}}^2 = \nicefrac{1}{2} + \nicefrac{1}{2} = 1$.
	Its state vector representation is given by $[\nicefrac{1}{\sqrt{2}},0,0,\nicefrac{1}{\sqrt{2}}]^\top$.
	Notably,~\ket{\varphi} is an example of an \emph{entangled} state where the state of one qubit inherently depends on the state of another qubit---a phenomenon unique to quantum computing.
\end{example}

A \emph{quantum circuit} manipulates the state of a quantum system. To this end, each \emph{quantum gate} of a circuit realizes a certain quantum operation.
Mathematically, these operations are represented by \mbox{$2^n\times 2^n$-dimensional}, unitary matrices\footnote{A complex matrix $U$ is unitary if $U^\dag U = U U^\dag = \mathbb{I}$, where $U^\dag$ denotes the conjugate-transpose of $U$ and $\mathbb{I}$ the identity matrix.} $U$ acting on the \linebreak $2^n$-dimensional state vector $\ket{\varphi} \equiv [\alpha_0,\dots,\alpha_{2^n-1}]^\top$.
Typically, quantum operations only act on $k<n$ qubits (predominantly $k=1$ or $k=2$) and, hence, are characterized by $2^k\times 2^k$-dimensional, unitary matrices which are extended to the full system size by tensor products with identity matrices.

\vspace{200cm}

\begin{example}\label{ex:gates}
	Popular single-qubit gates include the Pauli gates $X$, $Y$, and $Z$, the Hadamard gate $H$, as well as the the phase gate $S$. The respective matrices are:
	\[
		X=\begin{bNiceMatrix}[small]
			0 & 1 \\ 1 & 0
		\end{bNiceMatrix}\quad
		Y=\begin{bNiceMatrix}[small]
			0 & -i \\ i & 0
		\end{bNiceMatrix}\quad
		Z=\begin{bNiceMatrix}[small]
			1 & 0 \\ 0 & -1
		\end{bNiceMatrix}\quad
		H=\nicefrac{1}{\sqrt{2}}\begin{bNiceMatrix}[small]
			1 & 1 \\ 1 & -1
		\end{bNiceMatrix}\quad
		S=\begin{bNiceMatrix}[small]
			1 & 0 \\ 0 & i
		\end{bNiceMatrix}.\]
	Most multi-qubit gates are \emph{controlled} gates, where a certain \mbox{single-qubit} gate is applied to a specified \emph{target} qubit only if all designated \emph{control} qubits are \ket{1}.
	One prominent example is the two-qubit \mbox{controlled-NOT} (\qop{CNOT}), which is described by the matrix
	\[
	\qop{CNOT}(q_c,q_t) = 
	\begin{bNiceMatrix}[small]
		1 & 0 & 0 & 0 \\
		0 & 1 & 0 & 0 \\
		0 & 0 & 0 & 1 \\
		0 & 0 & 1 & 0
	\end{bNiceMatrix}.\]
\end{example}
The overall quantum circuit~$G$ (realizing a quantum algorithm) is eventually represented as
a sequence of quantum gates~$g_i$, i.e., by $G=g_0,\dots,g_{m-1}$ with $m$ being the total number of gates.
The functionality of this circuit is described by the unitary matrix \mbox{$U=U_{m-1}\cdot\hdots\cdot U_0$}, where $U_i$ is the unitary matrix corresponding to gate $g_i$.

\begin{example}\label{ex:circuits}
	Consider the quantum circuit $G=g_0g_1$ acting on two qubits (denoted $q_0$ and $q_1$) with $g_0=H(q_1)$ (i.e.,~an \qop{H} gate applied to~$q_1$) and $g_1=\mathit{CNOT}(q_1,q_0)$ (i.e.,~a \qop{CNOT} gate with control qubit~$q_1$ and target qubit~$q_0$). Then, the respective matrices $U_0$, $U_1$, and the overall system matrix $U=U_1\cdot U_0$ are given by
	\[
	U_0 = H \otimes \mathbb{I}_2 = \frac{1}{\sqrt{2}}
	\begin{bNiceMatrix}[small]
		1 & 0 & 1 & 0 \\
		0 & 1 & 0 & 1 \\
		1 & 0 & -1 & 0 \\
		0 & 1 & 0 & -1
	\end{bNiceMatrix}\quad
	U_1 =
	\begin{bNiceMatrix}[small]
		1 & 0 & 0 & 0 \\
		0 & 1 & 0 & 0 \\
		0 & 0 & 0 & 1 \\
		0 & 0 & 1 & 0
	\end{bNiceMatrix}\quad
	U = \frac{1}{\sqrt{2}}
	\begin{bNiceMatrix}[small]
		1 & 0 & 1 & 0 \\
		0 & 1 & 0 & 1 \\
		0 & 1 & 0 & -1 \\
		1 & 0 & -1 & 0
	\end{bNiceMatrix}
	.
	\]
\end{example}
For more details about quantum computing we refer to~\cite{nielsenQuantumComputationQuantum2010,watrousTheoryQuantumInformation2018}.

\subsection{Verification of Quantum Circuits}
\label{sec:quantumverification}

In the quantum realm, the verification problem can be stated in a similar fashion as for classical circuits: 
Given a circuit \linebreak \mbox{$G=g_0\dots g_{m-1}$}, it should be checked whether it adheres to an also given specification\footnote{Note that the terms \emph{Device Under Verification} and \emph{Golden Specification} are not established in the quantum realm (yet), which is why we simply use the terms \emph{circuit} and \emph{specification} in the following.}.
For the sake of this work and without loss of generality we assume in the following that the specification is given as a unitary function $U$---possibly described by a \mbox{high-level} quantum algorithm, another circuit, or further functional representations for quantum computing.

However, due to the more complex/expressive description, the formal verification of quantum circuits poses even more challenges than in the classical realm. 
Despite recent advances in the design of diverse/efficient formal verification methods~\cite{burgholzerAdvancedEquivalenceChecking2020,burgholzerVerifyingResultsIBM2020,duncanGraphtheoreticSimplificationQuantum2019,yamashitaFastEquivalencecheckingQuantum2010,yamashitaFastEquivalencecheckingQuantum2010, ardeshir-larijaniAutomatedEquivalenceChecking2018}, 
these can only escape the imminent complexity to a certain extent.
Accordingly, simulative verification might provide a promising alternative as well. 
In fact, this
has  already been considered
in theoretical quantum information, where 
(truly quantum-based) methods have been proposed (see,~e.g.,~\cite[Section~3]{watrousTheoryQuantumInformation2018} and~\cite{khatriQuantumassistedQuantumCompiling2019}). But these approaches would require an execution 
on actual quantum computing devices, whose availability and accessibility still is severely restricted.
Hence, before valuable quantum computing resources are wasted to verify a quantum circuit, efficient alternatives which can be employed prior to an actual execution on a quantum computer (using classical computing devices) are of high interest\footnote{This has similarities to the verification of classical circuits which also shall be conducted prior to an actual execution in the field.}.

\vspace{200cm}

This eventually results in the following simulative verification scheme for quantum circuits:
\begin{enumerate}
	\item Consider a set $\mathcal{S}$ of quantum states (which serve as stimuli).
	\item Pick (and prepare) a quantum state $\ket{\varphi} \in \mathcal{S}$. %
	\item Simulate (on a classical device) both $U$ and $G$ with this initial state---resulting in two states \ket{\varphi_U} and \ket{\varphi_G}, respectively.
	\item Compare the output \ket{\varphi_G} generated by the realization $G$ with the desired output \ket{\varphi_U} by computing the quantum fidelity~$\mathcal{F}$ between both states~\cite{nielsenQuantumComputationQuantum2010}\footnote{In this regard the fidelity $\mathcal{F}$ acts as a similarity measure between two states---effectively computing the squared overlap of the states' amplitudes.},~i.e., 
	\[\mathcal{F}(\ket{\varphi_U}, \ket{\varphi_G}) = \abs{\braket{\varphi_U}{\varphi_G}}^2\in[0,1].\]
	\item If $\mathcal{F}(\ket{\varphi_U}, \ket{\varphi_G})\neq 1$, the stimulus \ket{\varphi} shows the incorrect behavior of~$G$ with respect to~$U$. Accordingly, the verification failed and the process is terminated.
	\item Remove $\ket{\varphi}$ from $\mathcal{S}$. 
	\item If $|\mathcal{S}| \neq \emptyset $ (i.e.,~$\mathcal{S}$ is still non-empty) continue with Step~(2); otherwise, the simulative verification process has been completed.
\end{enumerate}

Now, the challenges of such an approach are as follows:
First, in order to simulate a quantum circuit $G=g_0\dots g_{m-1}$ starting with an initial state \ket{\varphi} on a classical device (Step~(3) from above), matrix-vector multiplications of the matrices $U_i$ (representing the circuit's gates~$g_i$) with the state vector \ket{\varphi} as well as the resulting output vectors, respectively, have to be conducted consecutively. 
\begin{example}\sloppypar
	Consider the circuit $G$ from Example~\ref{ex:circuits} and the initial state $\ket{\varphi}=\ket{00} \equiv [1,0,0,0]^\top$. Applying the gate $g_0 = H(q_1)$ to this initial state,~i.e.,~computing $U_0\ket{\varphi}$, produces a new state $\ket	{\varphi^\prime} = \nicefrac{1}{\sqrt{2}} \ket{00} + \nicefrac{1}{\sqrt{2}}\ket{10} \equiv [\nicefrac{1}{\sqrt{2}},0,\nicefrac{1}{\sqrt{2}},0]^\top$. Afterwards, applying \mbox{$g_1=\qop{CNOT}(q_1,q_0)$} to $\ket{\varphi^\prime}$,~i.e.,~computing $U_1\ket{\varphi^\prime}$, results in the final state $\ket{\varphi^{\prime\prime}}=\nicefrac{1}{\sqrt{2}} \ket{00} + \nicefrac{1}{\sqrt{2}}\ket{11}\equiv [\nicefrac{1}{\sqrt{2}},0,0,\nicefrac{1}{\sqrt{2}}]^\top$---representing the output state generated by this circuit for stimulus/input $\ket{\varphi}$.
\end{example}

\sloppypar This leads to an exponential complexity since the involved vectors and matrices have a size of $2^n$ and $2^n \times 2^n$, respectively (with~$n$ being the number of qubits).
But although this is substantially harder than for the verification of classical circuits (here, a single simulation yields only linear complexity),
rather powerful methods have been recently proposed to tackle this complexity---including methods based on highly optimized and parallel \mbox{matrix-computations}~\cite{guerreschiIntelQuantumSimulator2020, jonesQuESTHighPerformance2018},
tensor networks~\mbox{\cite{villalongaFlexibleHighperformanceSimulator2019, pednaultLeveragingSecondaryStorage2019}},
 quasiprobability/stabilizer-rank methods~\cite{seddonQuantifyingQuantumSpeedups2020} (and references therein), as well as decision diagrams~\cite{niemannQMDDsEfficientQuantum2016, zulehnerAdvancedSimulationQuantum2019,zulehnermatrix-vector2019}.

Second, as in the verification of classical circuits, the quality of the verification process heavily depends on the applied set of stimuli, i.e., 100\% certainty cannot be guaranteed as long as the set of applied stimuli is not exhaustive. 
Moreover, while the stimuli space for classical circuits is finite (each input bit can be assigned either $0$ or $1$---yielding a total of~$2^n$ possible stimuli), the state space in the quantum realm is infinitely large (possible stimuli are elements of a $2^n$-dimensional Hilbert space). 
This raises the question on whether simulative verification of quantum circuits (on classical computers) is suitable at all and, if so, how to generate proper stimuli~\ket{\varphi} to efficiently check the correctness of a quantum circuit.

In the following, we show that, although the perspective of a possible infinite number of stimuli may seem rather grim at a first glance, there are promising ways to check the correctness of quantum circuits using simulative verification. These, however, severely depend on how the stimuli are actually generated.
In fact, we show (both, theoretically and empirically) that high error detection rates can be achieved even if only a few randomly-chosen stimuli are considered---as long as these are generated in a specific fashion.

\section{Random Stimuli Generation}
\label{sec:proposed}
\vspace*{-0.35em}

In this section, we propose different schemes for the generation of (random) stimuli and explore how well they can show the correctness 
of a quantum circuit. 
To this end, each of the following subsections introduces, illustrates, and (theoretically) analyzes different stimuli generation schemes and their likeliness of detecting errors. 
Eventually, this will show that simulative verification indeed is very promising since sets of stimuli can be generated in a fashion
that offers a nice trade-off between error detection rate (as well as the required number of stimuli) and efficiency. 
The results of these conceptual and theoretical considerations have also been empirically confirmed as summarized later in Section~\ref{sec:results}.

\subsection{Classical Stimuli}
\label{sec:computationalbasis}
\vspace*{-0.35em}

The most straight-forward application of simulative verification for quantum circuits (compared to the classical approach reviewed in Section~\ref{sec:classical}) is to consider the set of 
computational basis states as stimuli (i.e.,~picking \ket{\varphi} from the set $\{\ket{i}\colon\;i\in\{0,1\}^n\}$) and computing $\mathcal{F}(U\ket{i}, V\ket{i})$, where $V$ is the matrix associated to $G$).
This has recently been studied in~\cite{burgholzerPowerSimulationEquivalence2020}, where empirical results show that choosing \enquote{classical} stimuli from this set at random \emph{often} allows to detect even small errors in quantum circuits. The following example illustrates this remarkable \enquote{power of simulation}.

\begin{example}\label{ex:single-qubit-errors}
Consider a certain $n$-qubit unitary specification $U$ and assume that some error affects (w.l.o.g.) the first qubit in the actual realization $G$. 
In the quantum realm, this means that the circuit $G$ is described by the unitary matrix $V = U \cdot (\mathbb{I}^{\otimes (n-1)} \otimes E)$, where $E$ describes an error gate that is applied to the first qubit.
Due to the inherent reversibility of quantum gates, this error has a localized effect on the output,~i.e., 
\begin{equation*}
\mathcal{F}(U|c \rangle, V |c \rangle) = \mathcal{F} (|c \rangle, (\mathbb{I}^{\otimes (n-1)}\otimes E) |c \rangle ) = | \langle c_0 | E |c_0 \rangle |^2 
\end{equation*}
for any classical stimulus $|c \rangle = |c_{n-1}\ldots c_0 \rangle$. 

Now suppose that \mbox{$E=X$},~i.e.,~ a bit flip error occured. 
In contrast to classical intuition, such an error can be detected by a \emph{single} simulation with \emph{any} classical stimulus~\ket{c}, since \mbox{$\mathcal{F}(U\ket{c}, V\ket{c}) = | \langle c_0 |X |c_0 \rangle|^2=0$} independent of~$\ket{c}$. 
\end{example}

However, this approach has a severe handicap which has not been discussed so far---namely that it is not faithful. 
Specifically, for each unitary specification $U$ there is an (infinitely large) family of realizations $G$ for which $\mathcal{F}(U\ket{c}, V\ket{c}) = 1$ holds for all classical stimuli~\ket{c}, even if %
quantum states \ket{\varphi} with %
$\mathcal{F}(U\ket{\varphi}, V\ket{\varphi}) \neq 1$ actually exist.
 An example illustrates the problem:
\begin{example}\label{ex:single-qubit-errors-handicap}
	Consider the same scenario as in Ex.~\ref{ex:single-qubit-errors}, but assume that the error is characterized as $E=Z$,~i.e.,~a phase flip error occurred.
	No classical stimulus~\ket{c} may detect such an error due to the fact that 
	$\mathcal{F}(U|c\rangle, V|c \rangle ) = | \langle c_0 | Z |c_0 \rangle|^2 = 1$ independent of \ket{c}.
	Intuitively, this happens whenever the \enquote{difference} of $U$ and $V$ is diagonal in the computational basis, such as $\mathbb{I}^{\otimes(n-1)}\otimes Z$ in case of this example.
\end{example}

Nevertheless, our empirical evaluations (which are summarized later in Section~\ref{sec:results}) show that whenever classical stimuli are actually capable of detecting a certain error in the realization $G$, they do so within remarkably few simulations with randomly picked classical stimuli---an effect contradictory to classical intuition as already observed in~\cite{burgholzerPowerSimulationEquivalence2020}. 

\vspace*{-2mm}
\subsection{Local Quantum Stimuli}
\label{sec:randombasis}
\vspace*{-0.35em}

In the previous section, we showed that classical stimuli generation is not sufficient to faithfully detect errors in quantum circuits. On an abstract level, this should not come as a surprise. After all, quantum circuits are designed to achieve tasks that classical circuits cannot.
In fact, a closer look at the single-(qu)bit case already reveals a fundamental discrepancy: 
Classical single-bit operations map one of two possible inputs ($0$ or $1$) to one of two possible outputs ($0$ or~$1$). In contrast, the quantum case is much more expressive: %
The set of all possible single-qubit states $\ket{\varphi}$ is infinitely large and can be parametrized by the 2-dimensional Bloch sphere \cite{nielsenQuantumComputationQuantum2010} illustrated in Figure~\ref{fig:bloch-sphere}. Single-qubit quantum operations map single-qubit states to single-qubit states. Geometrically, this family encompasses all possible rotations of the Bloch sphere as well as all reflections. 
Classical (single-qubit) stimuli,~i.e.,~the states \ket{0} and \ket{1}, are not expressive enough to reliably probe such a continuum of operations. They correspond to antipodal points on the (Bloch) sphere and it is simply impossible to detect certain transformations by tracking the movement of only two antipodal points.

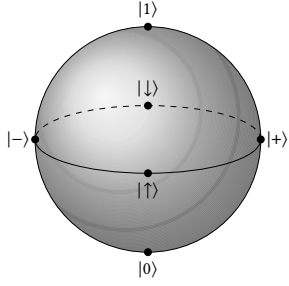
\begin{figure}
\centering
\scalebox{0.75}{
		\begin{tikzpicture}
		\shade[ball color = gray!40, opacity = 0.4] (0,0) circle (2cm);
		\draw (0,0) circle (2cm);
		\draw (-2,0) arc (180:360:2 and 0.6);
		\draw[dashed] (2,0) arc (0:180:2 and 0.6);
		\fill[fill=black] (2,0) circle(2pt);
		\node at (2.3,0) {$|+ \rangle$};
		\fill[fill=black] (-2,0) circle(2pt);
		\node at (-2.3,0) {$|- \rangle $};
		\fill[fill=black] (0,2) circle(2pt);
		\node at (0,2.3) {$|1 \rangle$};
		\fill[fill=black] (0,-2) circle(2pt);
		\node at (0,-2.3) {$|0\rangle$};
		\fill[fill=black] (0,-0.6) circle(2pt);
		\node at (0,-0.9) {$|\!\uparrow \rangle$};
		\fill[fill=black] (0,0.6) circle(2pt);
		\node at (0,0.9) {$|\!\downarrow \rangle $};
		\end{tikzpicture}}\vspace*{-2mm}
	\caption{Bloch Sphere%
	}\vspace*{-5mm}
	\label{fig:bloch-sphere}
\end{figure}

In order to address this, also stimuli beyond (classical) basis states should be considered. More precisely, three pairs of antipodal points are sufficient for full resolution~\cite{schwingerUnitaryOperatorBases1960, klappeneckerMutuallyUnbiasedBases2005,kuengQubitStabilizerStates2015}, namely %
\begin{align*}
&\ket{0},& &\ket{1},& &\text{($Z$-basis)}, \\
&\ket{+}= \nicefrac{1}{\sqrt{2}}(\ket{0}+\ket{1}),& &\ket{-} = \nicefrac{1}{\sqrt{2}}(\ket{0}-\ket{1}),& &\text{($X$-basis)}, \mbox{and}\\
&\ket{\uparrow}= \nicefrac{1}{\sqrt{2}}(\ket{0}+i\ket{1}),& &\ket{\downarrow} = \nicefrac{1}{\sqrt{2}}(\ket{0}-i\ket{1}),& &\text{($Y$-basis)}.
\end{align*}
Generating stimuli uniformly at random from this sextuple\footnote{The single-qubit states $\ket{0},\ket{1},\ket{+},\ket{-},\ket{\uparrow},\ket{\downarrow}$ can be generated from the basis state \ket{0} by applying the gates $\mathbb{I}$, \qop{X}, \qop{H}, \qop{XH}, \qop{HS}, or \qop{XHS}, respectively.} produces a set that is expressive enough to detect \emph{any} \mbox{single-qubit} error. More precisely, for any pair of functionally different \mbox{single-qubit} unitaries~$U$ and~$V$, at least one input \mbox{$\ket{l_1} \in \left\{ \ket{0},\ket{1},\ket{+},\ket{-},\ket{\uparrow},\ket{\downarrow}\right\}$}
produces functionally different outputs, i.e., the fidelity~$\mathcal{F}(U\ket{l_1}, V \ket{l_1})$  is guaranteed to be $\neq 1$.

This desirable feature extends to the multi-qubit case. That is, if we independently select one of these six (single-qubit) states for every available qubit, every \enquote{local} single-qubit error may be detected. Thus, for $n$ qubits, we consider the following ensemble of \emph{local quantum stimuli}: 
\begin{equation}
\ket{l} = \ket{l_{n-1}} \otimes \cdots \otimes \ket{l_0} \text{ with } \ket{l_i} \in \left\{ \ket{0},\ket{1},\ket{+},\ket{-},\ket{\uparrow},\ket{\downarrow}\right\}
\label{eq:local-stimuli}
\end{equation}

\begin{example}\label{ex:single-qubit-errors-local}
Let us revisit the scenario from Ex.~\ref{ex:single-qubit-errors} (and Ex.~\ref{ex:single-qubit-errors-handicap}).
Compared to classical stimuli, local quantum stimuli behave in a more homogeneous fashion on the classical extreme cases shown before:
First, suppose that $E=X$ (bit flip error). Then, 
	\[ \mathcal{F}(U|l \rangle, V|l \rangle) = | \langle l_0| X |l_0 \rangle|^2 =\left\{\begin{array}{ll}0 & |l_0 \rangle \in \left\{|0 \rangle, |1 \rangle, |\!\uparrow, |\!\downarrow \rangle \right\} \\ 1 & |l_0 \rangle \in \left\{|+ \rangle, |- \rangle \right\} \end{array}\right. \] 
Compared to classical stimuli, only $\nicefrac{2}{3}$ of all local quantum stimuli detect this type of error.
Now, suppose that $E=Z$ (phase flip error). Then, 
\[ \mathcal{F}(U|l \rangle, V|l \rangle) = | \langle l_0| Z |l_0 \rangle|^2 =\left\{\begin{array}{ll}0 & |l_0 \rangle \in \left\{|+ \rangle, |- \rangle, |\!\uparrow, |\!\downarrow \rangle \right\} \\ 1 & |l_0 \rangle \in \left\{|0 \rangle, |1 \rangle \right\} \end{array}\right. \]
Consequently, in contrast to not detecting such an error with classical stimuli at all, again $\nicefrac{2}{3}$ of all local quantum stimuli are capable of detecting this type of error.
\end{example}

This observation that local quantum stimuli can detect errors which would have remained undetected using classical stimuli is not a coincidence. 
In fact, the collection of a total of $6^n$ local quantum stimuli is expressive enough to detect \emph{any} error in a quantum circuit.

\begin{theorem}\label{thm:faithful} For each pair of functionally distinct $n$-qubit unitaries~$U$ and~$V$, there exists at least one local quantum stimulus \ket{l} as defined in Eq.~\eqref{eq:local-stimuli}
	that detects the error, i.e., yields~$\mathcal{F}(U \ket{l}, V \ket{l} ) \neq1$.
\end{theorem}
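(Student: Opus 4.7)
The plan is to establish the contrapositive: if $\mathcal{F}(U\ket{l},V\ket{l}) = 1$ for every local quantum stimulus $\ket{l}$, then $U$ and $V$ agree up to a global phase and are therefore functionally equivalent. Setting $W = V^\dagger U$, the fidelity condition rewrites as $\abs{\bra{l} W \ket{l}}^2 = 1$. Since $W$ is unitary, $W\ket{l}$ has unit norm, so Cauchy--Schwarz gives $\abs{\bra{l} W \ket{l}} \le 1$, with equality precisely when $\ket{l}$ is an eigenvector of $W$. Hence the standing assumption forces $W\ket{l} = e^{i\theta_l} \ket{l}$ for every local stimulus, with an \emph{a priori} $l$-dependent phase $\theta_l$.

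The decisive step is to show that all of these phases coincide. I would fix an arbitrary prefix $\ket{l_{n-1}} \otimes \cdots \otimes \ket{l_1}$ and vary only the remaining qubit over the six allowed states. The linear identity $\ket{+} = \tfrac{1}{\sqrt{2}}(\ket{0}+\ket{1})$ lets me expand $W(\ket{l_{n-1}} \otimes \cdots \otimes \ket{l_1} \otimes \ket{+})$ in two different ways; matching coefficients against the linearly independent tensor factors $\ket{0}$ and $\ket{1}$ in the last slot yields $e^{i\theta_0} = e^{i\theta_+} = e^{i\theta_1}$. Repeating the same trick with the expansions of $\ket{-},\ket{\uparrow},\ket{\downarrow}$ in the computational basis shows that all six single-qubit alternatives in the last slot share one common eigenvalue. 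Performing this reasoning qubit by qubit, any two local stimuli differing in a single qubit carry the same eigenvalue, and because any two local stimuli can be connected by a chain of at most $n$ such single-qubit swaps, a common $\lambda$ satisfies $W\ket{l} = \lambda\ket{l}$ for \emph{every} local stimulus.

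The conclusion is then immediate. The $2^n$ classical basis states $\{\ket{c}\colon c\in\{0,1\}^n\}$ are themselves local stimuli and form an orthonormal basis of the full Hilbert space. Since $W$ acts as $\lambda \mathbb{I}$ on a basis, $W = \lambda \mathbb{I}$ globally, hence $U = \lambda V$---contradicting the hypothesis that $U$ and $V$ are functionally distinct.

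I expect the eigenvalue-equality step to be the main obstacle: the phases $\theta_l$ are a priori unconstrained, and pinning them all down requires exploiting the overcomplete, three-basis structure of the six single-qubit states through their linear dependencies. It is precisely this step that would fail if one restricted to only the two classical states $\ket{0},\ket{1}$---mirroring the phase-flip handicap of Example~\ref{ex:single-qubit-errors-handicap} that motivated enlarging the stimulus ensemble in the first place.
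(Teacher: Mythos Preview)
Your proof is correct and complete. The contrapositive reduction to $W=V^\dagger U$, the Cauchy--Schwarz step identifying every local stimulus as an eigenvector, the use of the linear relation $\ket{+}=\tfrac{1}{\sqrt{2}}(\ket{0}+\ket{1})$ (and its analogues) to force coincidence of the eigenvalues slot by slot, and the final chain-connectivity argument all go through exactly as you describe.

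Your route, however, is genuinely different from the paper's. The paper argues via an averaging identity: the expectation $\mathbb{E}_{\ket{l}}\mathcal{F}(U\ket{l},V\ket{l})$ over all $6^n$ local stimuli is shown to equal a bona fide distance measure on unitaries that attains~$1$ only when $U$ and $V$ are functionally equivalent; if they are distinct, the average drops below~$1$ and hence at least one term must. This exploits the fact that the six single-qubit states form a state $2$-design (equivalently, a tight informationally complete frame), which is the structural property underpinning their tomographic completeness. Your argument bypasses the design machinery entirely and works directly with the linear dependencies $\ket{+},\ket{-},\ket{\uparrow},\ket{\downarrow}\in\operatorname{span}\{\ket{0},\ket{1}\}$; it is more elementary and self-contained, and it makes transparent exactly where classical stimuli alone fail (your closing remark). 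The paper's approach, on the other hand, yields a quantitative relation to the average gate fidelity essentially for free, which is what later powers the analysis of global stimuli in Section~\ref{sec:randomclifford} and Corollary~\ref{thm:immediate-detection}.
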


\begin{proof}[Proof sketch\protect\footnotemark] \footnotetext{Note that, due to page limitations, we only provide a sketch of the proof for this theorem.}
	The key idea is to relate the expected fidelity~$\mathbb{E}_{\ket{l}} \mathcal{F}(U \ket{l}, V \ket{l})$---where the average is taken over all $6^n$ locally random stimuli---to a meaningful distance measure in the space of unitary matrices. This average outcome fidelity equals~1 if and only if $U$ and $V$ are functionally equivalent. 
	Now, suppose that $U$ and $V$ are functionally distinct unitaries. Then,~$\mathbb{E}_{\ket{l}} \mathcal{F}(U \ket{l}, V \ket{l}) <1$ which is only possible if (at least) one stimulus $\ket{l}$ produces an outcome fidelity that is strictly smaller than one.
\end{proof}

While this rigorous statement asserts that any error can be detected by (at least) one local quantum stimulus, it does not provide any advice on how to find the ``right'' stimulus. This is a very challenging problem in general, but the above example suggests that repeated random sampling of stimuli should \enquote{do the job}. Our empirical studies (see Section~\ref{sec:results}) confirm that such a procedure works remarkably well. Typically, few randomly generated local quantum stimuli suffice to detect realistic errors. 

\vspace*{-0.35em}
\subsection{Global Quantum Stimuli}
\label{sec:randomclifford}
\vspace*{-0.35em}

The previous section has shown that a modest increase in the expressiveness of stimuli can already make a large difference. 
Local quantum stimuli can detect \emph{any} error, while classical stimuli cannot. 
This is interesting, because local quantum stimuli are comparatively few in number ($6^n$ states in a $2^n$-dimensional state space to detect arbitrary discrepancies in unitary circuits) and actually do not inherit many further quantum features.
For example, \enquote{global} quantum features such as entanglement are not employed by them at all.
This begs the question: what kind of advantages can even more expressive and ``more quantum'' stimuli offer?
Faithfulness is not a problem anymore, but richer, global stimuli may help to detect errors \emph{earlier},~i.e.,~after substantially fewer iterations. 

In order to identify powerful global quantum stimuli, it is helpful to revisit local quantum stimuli as introduced in Eq.~\eqref{eq:local-stimuli} from a different perspective:
They are generated through starting with a very simple classical state (i.e.,~\ket{0\ldots 0}) and applying certain \mbox{single-qubit} gates to the individual qubits,~e.g.,~$\ket{0} \otimes \ket{+} \otimes \ket{\uparrow} = (\mathbb{I} \otimes H \otimes HS) \ket{000}$.
 Consequently, \emph{random} local stimuli are generated by choosing this \emph{layer} of single-qubit gates at random. 
This generation scheme can be readily generalized.
Rather than selecting only a single layer of (single-qubit) gates, we construct a generation circuit~$G_{0} \cdots G_{l-1}$ 
that has $l > 1$ layers and, most importantly, also features two-qubit gates. That is, a stimuli $\ket{g}$ with $\ket{g} = (G_{0} \cdots G_{l-1}) \ket{0\ldots 0}$ is generated, where each~$G_i$ is a (single) layer comprised of so-called Clifford gates ($H$, $S$, $\qop{CNOT}$)~\cite{gottesmanStabilizerCodesQuantum1997}. 

Overall, this set of \emph{global quantum stimuli} $\ket{g}$ contains all local quantum stimuli, but is much richer and much more expressive. For instance, the overwhelming majority of global quantum stimuli will be highly entangled.
Provided that the number of layers $l$ is proportional to the number of qubits $n$~\cite{hunter-jonesUnitaryDesignsStatistical2019,brandaoLocalRandomQuantum2016}, these stimuli show remarkable properties. 
Most notably, the expected outcome fidelity (averaged over all possible global quantum stimuli $|g \rangle$) accurately approximates one of the most prominent distance measures for $n$-qubit quantum circuits, namely
\begin{equation}
\mathbb{E}_{|g \rangle} \mathcal{F}(U|g \rangle, V|g \rangle) \approx \mathcal{F}_{\mathrm{avg}}(U,V) = \tfrac{1}{2^n+1} \left( 1+ 2^n \big| \mathrm{tr}(U^\dagger V) \big|^2 \right).
\label{eq:average-fidelity}
\end{equation}
Here, $\mathrm{tr} (U^\dagger V)$ denotes the trace of the unitary matrix $U^\dagger V$.
This \emph{average (gate) fidelity}~\cite{nielsenQuantumComputationQuantum2010} forms the basis of many \mbox{state-of-the-art} quantum calibration procedures~\cite{magesanCharacterizingQuantumGates2012,kuengComparingExperimentsFaulttolerance2016}. 
Importantly, most (realistic) errors lead to an average fidelity that is tiny. Eq.~\eqref{eq:average-fidelity} allows us to capitalize on this phenomenon. The following statement is an immediate consequence of Eq.~\eqref{eq:average-fidelity} and Markov's inequality:

\begin{corollary}\label{thm:immediate-detection}
Consider a unitary specification $U$ and a particular realization as a quantum circuit $G$ (represented by the unitary $V$). 
Then, a randomly selected global quantum stimulus obeys
\begin{equation*}
\mathrm{Pr}_{|g \rangle} \left[ \mathcal{F}(U|g \rangle, V |g \rangle) =1 \right] \leq \mathcal{F}_{\mathrm{avg}}(U,V).
\end{equation*}
The r.h.s.\ equals~1 if and only if $G$ correctly realizes $U$, otherwise it is typically \emph{much} smaller.
\end{corollary}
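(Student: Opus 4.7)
The plan is to combine Markov's inequality (as already hinted in the excerpt) with the average-fidelity formula~\eqref{eq:average-fidelity} stated immediately above the corollary. Set $X = \mathcal{F}(U\ket{g}, V\ket{g})$, where $\ket{g}$ is drawn from the ensemble of global quantum stimuli. Since the fidelity is always bounded in $[0,1]$, one has the pointwise inequality $\mathbf{1}[X=1] \leq X$: on the event $X=1$ the two sides coincide, and on the complementary event the left-hand side vanishes while the right-hand side is nonnegative. Taking expectations and invoking Eq.~\eqref{eq:average-fidelity} gives
\[
\mathrm{Pr}_{\ket{g}}\!\left[\mathcal{F}(U\ket{g},V\ket{g})=1\right] = \mathbb{E}_{\ket{g}}\!\left[\mathbf{1}[X=1]\right] \leq \mathbb{E}_{\ket{g}}[X] \approx \mathcal{F}_{\mathrm{avg}}(U,V),
\]
which is precisely the asserted bound.

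To finish the ``if and only if'' clause, I would analyze the right-hand side separately. Substituting $V = e^{i\phi}U$ into the closed form for $\mathcal{F}_{\mathrm{avg}}$ makes $|\mathrm{tr}(U^\dagger V)|$ attain its maximum and yields $\mathcal{F}_{\mathrm{avg}}(U,V)=1$; this covers the direction where $G$ correctly realizes $U$. Conversely, if $\mathcal{F}_{\mathrm{avg}}(U,V)=1$, then $|\mathrm{tr}(U^\dagger V)|$ must saturate the Cauchy--Schwarz bound on the Hilbert--Schmidt inner product $\langle U, V\rangle_{\mathrm{HS}} = \mathrm{tr}(U^\dagger V)$, and equality in Cauchy--Schwarz forces $V$ to be a scalar multiple of $U$. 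Unitarity then restricts this scalar to a global phase, which describes the same physical operation. The qualitative claim that $\mathcal{F}_{\mathrm{avg}}$ is ``typically much smaller'' when $G$ is faulty is justified by observing that structured errors make $U^\dagger V$ far from proportional to $\mathbb{I}$, so its normalized trace is tiny.

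The genuine mathematical content sits in Eq.~\eqref{eq:average-fidelity} rather than in the corollary itself; the corollary is a one-line Markov step once the average-fidelity identity is available. The only subtlety I would flag in the write-up is that the relation in Eq.~\eqref{eq:average-fidelity} is an \emph{approximation} coming from the fact that a random Clifford circuit of depth $l = \Theta(n)$ forms an approximate unitary $2$-design~\cite{hunter-jonesUnitaryDesignsStatistical2019,brandaoLocalRandomQuantum2016}. Consequently the stated inequality inherits the same $\approx$ qualifier, and for absolute (rather than approximate) guarantees one would need to work with an exact $2$-design ensemble---an observation I would mention explicitly rather than hide, since it clarifies what role the depth parameter $l$ plays in the scheme.
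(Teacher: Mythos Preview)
Your proposal is correct and matches the paper's own reasoning exactly: the paper states just before the corollary that it ``is an immediate consequence of Eq.~\eqref{eq:average-fidelity} and Markov's inequality,'' and your indicator bound $\mathbf{1}[X=1]\leq X$ followed by taking expectations is precisely Markov at threshold $1$ for a $[0,1]$-valued random variable. Your additional Cauchy--Schwarz argument for the ``if and only if'' clause and your explicit remark about the $\approx$ in Eq.~\eqref{eq:average-fidelity} (stemming from the approximate $2$-design property) go slightly beyond what the paper spells out, but are consistent with it.
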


This general %
statement does have powerful implications when applied to a precise example.

\begin{example}
Consider again the scenario from Ex.~\ref{ex:single-qubit-errors} (and Ex.~\ref{ex:single-qubit-errors-handicap}): A single-qubit error $E$ occurred on the first qubit leading to the unitary $V=U\cdot (\mathbb{I}^{\otimes (n-1)}\otimes E)$, where the single-qubit error is either $E=X$ (bit flip error) or $E=Z$ (phase flip error). 
Then, $\mathcal{F}_{\mathrm{avg}}(U,V) = \tfrac{1}{2^n+1}\leq 2^{-n}$ (because Pauli matrices are traceless) 
and Corollary~\ref{thm:immediate-detection} implies that it is very unlikely to \emph{not} detect this error with a single, random global quantum stimulus, i.e., $\mathrm{Pr}_{|g \rangle} \left[ \mathcal{F}(U|g \rangle, V |g \rangle)=1\right] \leq 2^{-n} \ll 1$. 
\end{example}

This example demonstrates the power of global quantum stimuli. However, it is important to keep in mind that this power is not for free. The generation of (random) global quantum stimuli and subsequent simulation is much more resource-intensive by comparison (as confirmed by our empirical evaluations in Section~\ref{sec:results}).

This can also be understood from a broader context:
The average (gate) fidelity as given by Eq.~\eqref{eq:average-fidelity} is closely related to another popular distance measure---the \emph{entanglement fidelity}. 
This quantity captures the performance of a powerful quantum stimulus $|\Omega \rangle$, see e.g.~\cite{khatriQuantumassistedQuantumCompiling2019}.
This stimulus is generated from $2n$ qubits by pairwise entangling individual qubits of one half of the system with the qubits of the other half. 
Applying both circuits to the first half of this state and computing the fidelity of the outcome states subsequently yields the entanglement fidelity~\mbox{\cite{schumacherSendingEntanglementNoisy1996,kitaevQuantumComputationsAlgorithms1997}},~i.e.,
\begin{equation}
\mathcal{F}(U \otimes \mathbb{I} |\Omega \rangle, V \otimes \mathbb{I} |\Omega \rangle ) 
= 4^{-n} \big| \mathrm{tr}(U^\dagger V ) \big|^2 = \mathcal{F}_{\mathrm{ent}}(U,V). 
\label{eq:entanglement-fidelity}
\end{equation}
Comparing Eq.~\eqref{eq:average-fidelity} and Eq.~\eqref{eq:entanglement-fidelity} shows that these quantities are almost identical.
This implies that global quantum stimuli accurately approximate the powerful quantum stimulus \ket{\Omega} on average.
Finally, we point out that conducting simulative verification with \ket{\Omega} itself is not feasible on classical computers, since requiring double the amount of qubits exponentially increases the resource-demand for classical simulations.

\begin{table*}[htb]
	\sisetup{table-number-alignment=center, table-text-alignment=center, separate-uncertainty, detect-weight, detect-inline-weight=math, table-figures-uncertainty=1}
	\centering
	\caption{Experimental results (quantities averaged over a total of approx.~$10^6$ different simulations)}
	\label{tab:results}\vspace*{-1em}
	\footnotesize
	\resizebox{0.96\linewidth}{!}{
	\begin{tabular}{@{}l*{9}{!{\quad}S[round-integer-to-decimal]}@{}}\toprule
		 & \multicolumn{3}{c}{Remove 1 random gate} & \multicolumn{3}{c}{Remove 2 random gates} & \multicolumn{3}{c}{Remove 3 random gates} \\
		\cmidrule(r){2-4} \cmidrule(lr){5-7} \cmidrule(l){8-10} 
		{Approach} & {$p_{\mathit{s}}$ [\si{\percent}]} & {$\varnothing\mathit{s}$} & {$\varnothing\mathit{t}$ [\si{\second}]} & {$p_{\mathit{s}}$ [\si{\percent}]} & {$\varnothing\mathit{s}$} & {$\varnothing\mathit{t}$ [\si{\second}]} & {$p_{\mathit{s}}$ [\si{\percent}]} & {$\varnothing\mathit{s}$} & {$\varnothing\mathit{t}$ [\si{\second}]}
		\\\midrule
		\csvreader[column count=28, no head]{./csv/evaluation_with_uncertainty.csv}
		{1=\approach, 
		2=\remonep, 3=\remones, 4=\remonet, 
		5=\remtwop, 6=\remtwos, 7=\remtwot, 
		8=\remthreep, 9=\remthrees, 10=\remthreet, 
		11=\addonep, 12=\addones, 13=\addonet,
		14=\addtwop, 15=\addtwos, 16=\addtwot,
		17=\addthreep, 18=\addthrees, 19=\addthreet,
		20=\toffBp, 21=\toffBs, 22=\toffBt,
		23=\toffEp, 24=\toffEs, 25=\toffEt,
		26=\avgP, 27=\avgS, 28=\avgT}
		{\approach &
		\remonep & \remones & \remonet &
		\remtwop & \remtwos & \remtwot &
		\remthreep & \remthrees & \remthreet \cr }
		\\[-\normalbaselineskip]\toprule
		& \multicolumn{3}{c}{Add 1 random gate} & \multicolumn{3}{c}{Add 2 random gates} & \multicolumn{3}{c}{Add 3 random gates}\\
		\cmidrule(r){2-4} \cmidrule(lr){5-7} \cmidrule(l){8-10} 
		{Approach} & {$p_{\mathit{s}}$ [\si{\percent}]} & {$\varnothing\mathit{s}$} & {$\varnothing\mathit{t}$ [\si{\second}]} & {$p_{\mathit{s}}$ [\si{\percent}]} & {$\varnothing\mathit{s}$} & {$\varnothing\mathit{t}$ [\si{\second}]} & {$p_{\mathit{s}}$ [\si{\percent}]} & {$\varnothing\mathit{s}$} & {$\varnothing\mathit{t}$ [\si{\second}]}
		\\\midrule
		\csvreader[column count=28, no head]{./csv/evaluation_with_uncertainty.csv}
		{1=\approach, 
		2=\remonep, 3=\remones, 4=\remonet, 
		5=\remtwop, 6=\remtwos, 7=\remtwot, 
		8=\remthreep, 9=\remthrees, 10=\remthreet, 
		11=\addonep, 12=\addones, 13=\addonet,
		14=\addtwop, 15=\addtwos, 16=\addtwot,
		17=\addthreep, 18=\addthrees, 19=\addthreet,
		20=\toffBp, 21=\toffBs, 22=\toffBt,
		23=\toffEp, 24=\toffEs, 25=\toffEt,
		26=\avgP, 27=\avgS, 28=\avgT}
		{\approach &
		\addonep & \addones & \addonet &
		\addtwop & \addtwos & \addtwot &
		\addthreep & \addthrees & \addthreet \cr}
		\\[-\normalbaselineskip]\toprule
		& \multicolumn{3}{c}{Add 10 random Toffolis at beginning} & \multicolumn{3}{c}{Add 10 random Toffolis at end}& \multicolumn{3}{c}{\bfseries Overall}\\
		\cmidrule(r){2-4} \cmidrule(lr){5-7}  \cmidrule(l){8-10} 
		{Approach} & {$p_{\mathit{s}}$ [\si{\percent}]} & {$\varnothing\mathit{s}$} & {$\varnothing\mathit{t}$ [\si{\second}]} & {$p_{\mathit{s}}$ [\si{\percent}]} & {$\varnothing\mathit{s}$} & {$\varnothing\mathit{t}$ [\si{\second}]} & {$p_{\mathit{s}}$ [\si{\percent}]} & {$\varnothing\mathit{s}$} & {$\varnothing\mathit{t}$ [\si{\second}]} 
		\\\midrule
		\csvreader[column count=28, no head]{./csv/evaluation_with_uncertainty.csv}
		{1=\approach, 
		2=\remonep, 3=\remones, 4=\remonet, 
		5=\remtwop, 6=\remtwos, 7=\remtwot, 
		8=\remthreep, 9=\remthrees, 10=\remthreet, 
		11=\addonep, 12=\addones, 13=\addonet,
		14=\addtwop, 15=\addtwos, 16=\addtwot,
		17=\addthreep, 18=\addthrees, 19=\addthreet,
		20=\toffBp, 21=\toffBs, 22=\toffBt,
		23=\toffEp, 24=\toffEs, 25=\toffEt,
		26=\avgP, 27=\avgS, 28=\avgT}
		{\approach &
		\toffBp & \toffBs & \toffBt &
		\toffEp & \toffEs & \toffEt &
		\avgP & \avgS & \avgT \cr}
		\\[-\normalbaselineskip]\bottomrule
	\end{tabular}}\\
{\footnotesize {$p_{\mathit{s}}$ [\si{\percent}]}: Error detection rate in percent  \hspace*{0.45cm} {$\varnothing\mathit{s}$}: Average number of stimuli  \hspace*{0.45cm} {$\varnothing\mathit{t}$ [\si{\second}]}: Average runtime in seconds \\
Since the obtained results are rather homogeneous across the respective benchmarks (as confirmed by the moderate standard deviation), we only list averaged values here.}\vspace*{-1.5em}
\end{table*}

\vspace*{-0.55em}
\section{Empirical Study}
\label{sec:results}
\vspace*{-0.35em}

In this section, we empirically study the behavior of the schemes proposed in Section~\ref{sec:proposed} through extensive evaluations.
To this end, the proposed schemes have been implemented in $\CC$ as part of the open-source JKQ framework for quantum computing~\cite{willeJKQJKUTools2020}.
More precisely, they have been integrated into the JKQ~QCEC quantum circuit equivalence checking tool (publicly available at~\mbox{\url{https://github.com/iic-jku/qcec}}) using the decision diagram-based simulator from~\cite{zulehnerAdvancedSimulationQuantum2019} for conducting the simulations.
In order to obtain a rigorous evaluation, 
we considered the following setup:
\begin{itemize}\sloppy
	\item We chose $25$ widely-used reversible/quantum algorithms %
	with \SIrange[range-units=single]{16}{34}{qubits}---constituting the respective reference implementations~$U$.
	\item Each algorithm has been compiled to a suitable IBM architecture using IBM Qiskit~\cite{aleksandrowiczQiskitOpensourceFramework2019}---constituting the realization~$G$.
	\item In order to study the detection of errors, a total of $8$ \mbox{error-injection} options have been considered for each circuit\footnote{In any realistic scenario where, e.g., a bug is present in the compilation flow, the resulting errors in $G$ would be much more severe than the error-injections studied in this work. Consequently, it can be deducted from the results obtained in this work that the proposed schemes perform even more reliably on such instances.}:
	\begin{itemize}
		\item Removal of $1$, $2$, or $3$ random gates from $G$,
		\item Insertion of $1$, $2$, or $3$ random gates from the set $\{X,Y,Z,H,S,T\}$ on random qubits into $G$,
		\item Insertion of $10$ random Toffoli gates at the beginning or at the end of $G$.
	\end{itemize}
	\item For each error-injection option, $50$ random seeds have been considered.
	\item For each resulting instance, $5$ random seeds have been used for randomly picking stimuli according to the respective scheme.
	\item For each resulting instance and random seed, up to $16$ simulations of $U$ and $G$ with stimuli randomly picked according to the specific scheme have been performed aiming to detect the injected error.
\end{itemize}
Overall, this led to a total of \si{50\,000} benchmark instances. Since for each instance on average approximately $3$ random stimuli were necessary to detect the error, a grand total of approx.~$10^6$ simulations have been conducted. 
To the best of our knowledge, this led to the broadest empirical evaluation of
simulative verification schemes for quantum circuits
to date.

The obtained results are summarized in Table~\ref{tab:results}. 
Here, we list the error detection rate $p_{\mathit{s}}$ in percent (i.e., the probability that the error is detected by the generated set of stimuli), the number of stimuli $\varnothing\mathit{s}$ needed to detect the error, and the runtime $\varnothing\mathit{t}$ of the respective scheme in seconds\footnote{The runtime depends on the simulator used, as well as the hardware the simulations are conducted on. Nevertheless, it allows to reason about the efficiency of the individual schemes to some extent.}.
Due to page limitations, we only list the averaged values (w.r.t.~the different error injections). However, since the obtained results are rather homogeneous across the respective benchmarks (as confirmed by the moderate standard deviation which is also listed in Table~\ref{tab:results}, this still allows for a proper interpretation of the results. 

\vspace{200cm}
From those results, the following conclusions can be drawn:
\begin{itemize}
	\item \emph{All} schemes lead to sets of stimuli with remarkable error detection rates. With randomly chosen stimuli only, few stimuli are sufficient to detect the vast majority of errors (while, in contrast, dedicated constrained-based stimuli generation, fuzzing, etc. methods~\cite{yuanConstraintbasedVerification2006,bergeronWritingTestbenchesUsing2006,kitchenStimulusGenerationConstrained2007,wille2009smt, leDetectionHardwareTrojans2019,laeuferRFUZZCoveragedirectedFuzz2018} are required in the classical realm to get a merely acceptable error detection rate).
	
	\item Based on these high standards, classical stimuli generation performs worst and often fails---especially in cases where individual (diagonal) gates are removed or added. This is a consequence of classical stimuli not being faithful as shown in Section~\ref{sec:computationalbasis}. At the same time, the corresponding simulations are very fast; making this scheme suitable for rapid prototyping.
	
	\item On the other side of the spectrum, global stimuli generation yields the most robust results,~i.e.,~requiring the least amount of stimuli and also achieving the highest error detection rates. This confirms the discussions from Section~\ref{sec:randomclifford} on the quality of those stimuli. Thus, this scheme is suitable for rigorous testing even if the simulation of those stimuli is severely more runtime-demanding. 
	
	\item Local quantum stimuli generation constitutes a trade-off  
	between quality and efficiency compared to the other two schemes. 
	Although this scheme is not as powerful as global quantum stimuli generation with respect to quality, it is faithful (as shown in Section~\ref{sec:randombasis}) and remains rather efficient. 
	
\end{itemize}

\vspace*{-0.35em}
\section{Conclusion}
\label{sec:conclusions}
\vspace*{-0.35em}

In this work, we showed that simulative verification in the quantum realm is much more powerful than in the classical realm.
On the one hand, we introduced, illustrated, and analyzed three potential quantum stimuli generation schemes offering a trade-off between error detection rate (as well as the required number of stimuli) and efficiency.
On the other hand, we showed (both, theoretically and empirically) that, in contrast to classical circuits, high error detection rates can be achieved by just considering a few \mbox{randomly-chosen} stimuli (generated according to the proposed schemes).
This eventually shows  that simulative verification offers huge potential in the verification of quantum circuits. 

\vspace*{-0.35em}
\begin{acks}
This work has partially been supported by the LIT Secure and Correct Systems Lab funded by the State of Upper Austria as well as by the BMK, BMDW, and the State of Upper Austria in the frame of the COMET program (managed by the FFG).
\end{acks}

\printbibliography
\end{document}